\newtheorem{prop}{\sc Proposition}[section]
\newcommand{\bs}[1]{\boldsymbol{#1}}
\newcommand{\mr}[1]{\mathrm{#1}}
\newcommand{\bm}[1]{\mathbf{#1}}
\newcommand{\ds}[1]{\mathds{#1}}
\newcommand{\indep}{\perp\!\!\!\perp}
\DeclareMathOperator*{\argmin}{argmin}
\begin{document}

\setstretch{1}

\pagestyle{empty}

\noindent {\large \bf Rejoinder: Efficiency and Structure in MNIR}

\vskip .2cm

\noindent 
Matt Taddy, 
The  University of Chicago Booth School of Business
\vskip .1cm
\noindent \hrulefill

\vskip .75cm
\setstretch{1.4}

I thank Prof.~Blei and Grimmer for their comments; it is great to have one's
work discussed by researchers who are both excellent statisticians and experts
in their respective fields.

The discussion can be  summarized under two themes. Prof.~Blei  is interested
in extending MNIR  to modeling additional, often latent, structure in text.
Prof.~Grimmer is concerned with causation and interpretability.  Both will be
answered in context of my original motivation for MNIR:  the estimation
efficiency derived from  assumptions on $\bm{x}|y$.  We'll begin with
estimator properties in a simple illustration, then turn to discussion of
latent factors and causal inference.

\section{Efficiency}

A related question of efficiency  has been studied by \citet{Efro1975} and
\citet{NgJord2002} in comparisons between logistic regression and `generative'
discriminant analysis.  Efron's generative classifier applies Bayes rule to
inverse multivariate normals $\bm{x}|y \sim \mr{N}(\mu_y, \Sigma)$, where
$\mu_y = \ds{E}[\bm{x}|y]$ varies with $y\in \{0,1\}$ but the covariance
matrix is shared across populations.  Given true  normal covariate
distributions separated by root Mahalanobis distances of 3 to 4, he finds
predictions from this routine to be 1.5 to 3 times more efficient than
logistic regression.  This efficiency gain is smaller than that found by  Ng and
Jordan for a Naive Bayes algorithm (each covariate is fit as
independent of the others given $y$), with their results loosely interpreted to
imply $\log(n)$ times higher efficiency for the generative predictor. Although
Naive Bayes independence is not assumed for the data itself, requirements on the
amount of information about $y$ available in each covariate have the effect of
limiting conditional dependence.

Our model presents a third scenario:  covariate dependence is fully
specified via the negative correlation of a multinomial. Consider
binary response $y \in \{0,1\}$ and the joint word-sentiment  distribution
$\mr{p}(\bm{x},y) = \mr{MN}(\bm{x}\mid \bm{q}(y))\mr{p}(y)$ where $q_j(y) =
\mr{exp}[\alpha_j + \varphi_jy]/\sum_l \mr{exp}[\alpha_l + \varphi_ly]$ --
that is, the collapsed model in Equation 1 of the main paper.   Then
the expected information for $\bs{\varphi}$ is $\pi \bm{W}$, where $\pi =
\ds{E}[y]$ and $\bm{W} = \mr{diag}(\bm{q}_1) - \bm{q}_1\bm{q}_1'$ with
$\bm{q}_1 = \bm{q}({y}=1)$, and standard results \citep[e.g.,][chap.
5]{vdv1998} imply that in a fixed vocabulary the variance for maximum
likelihood estimator $\bs{\hat \varphi}$ scales with $M = \sum_i \sum_j
x_{ij}$, the  total number of  words.

{\samepage \begin{prop}\label{mlevar}   Assume the above joint model for $y$ and 
$\bm{x}$ with $\pi>0$, and write $\bs{\hat
\varphi}$  for the MLE fit of $\bs{\varphi}$ in our collapsed
MNIR model.  The estimation error converges
in distribution as \[ \sqrt{\pi M}(\bs{\hat \varphi}-\bs{\varphi}) \leadsto
\mr{N}\left(\bm{0}, \bm{W}^{-1}\right) \]  \end{prop}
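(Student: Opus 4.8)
The plan is to read the proposition as a textbook application of maximum-likelihood asymptotics, in which the only real work is to identify the effective sample size and the Fisher information; I would lean on the asymptotic-normality theorem for MLEs in \citep[chap.~5]{vdv1998} and verify its hypotheses. I would take the $n$ documents $(\bm{x}_i,y_i)$ as the i.i.d.\ sampling units, with $\bm{x}_i\mid y_i \sim \mr{MN}(m_i,\bm{q}(y_i))$ and $m_i=\sum_j x_{ij}$. It is worth noting at the outset why the word total $M$, not the document count $n$, governs the rate: each document is a block of $m_i$ conditionally i.i.d.\ categorical draws, and since $\bs{\varphi}$ enters $\bm{q}(y)$ only through the term $\varphi_j y$, a draw is informative about $\bs{\varphi}$ only when it comes from a document with $y_i=1$. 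Thus $\pi=\ds{E}[y]$ and $M$ are the quantities that accumulate information.

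Next I would compute the score and information by hand. Differentiating $\log q_j(y)=\alpha_j+\varphi_j y-\log\sum_l \mr{exp}(\alpha_l+\varphi_l y)$ gives $\partial_{\varphi_j}\log q_k(y)=y(\delta_{jk}-q_j(y))$, so the $\bs{\varphi}$-score contributed by a single word drawn as category $c$ is $y(\bm{e}_c-\bm{q}(y))$. This is identically $\bm{0}$ when $y=0$ and, when $y=1$, is mean-zero with conditional variance $\mr{diag}(\bm{q}_1)-\bm{q}_1\bm{q}_1'=\bm{W}$; averaging over the response gives a per-word expected information for $\bs{\varphi}$ of $\pi\bm{W}$, matching the expression quoted just before the proposition, so the per-document information is $\ds{E}[y_i m_i]\bm{W}=\pi\ds{E}[m]\bm{W}$ when lengths are independent of the response. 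Identifiability forces the usual softmax normalization: after fixing a reference category $\bm{W}$ is the nonsingular covariance of the reduced, full-rank multinomial, and $\bm{W}^{-1}$ in the statement is to be read in that parameterization.

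I would then check the regularity conditions and assemble the conclusion. The model is a smooth (indeed analytic) exponential family in $(\bs{\alpha},\bs{\varphi})$, hence differentiable in quadratic mean, and its log-likelihood is concave in the natural parameters, which delivers consistency of $\bs{\hat\varphi}$; the cited theorem then gives $\sqrt{n}(\bs{\hat\varphi}-\bs{\varphi})\leadsto\mr{N}(\bm{0},(\pi\ds{E}[m]\bm{W})^{-1})$. A law of large numbers gives $M/n\to\ds{E}[m]$, so a Slutsky argument rescales $\sqrt n$ to $\sqrt{\pi M}$ and cancels $\ds{E}[m]$, yielding $\sqrt{\pi M}(\bs{\hat\varphi}-\bs{\varphi})\leadsto\mr{N}(\bm{0},\bm{W}^{-1})$.

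The step I expect to require the most care is confirming that the operative information is $\pi\bm{W}$ rather than something smaller, i.e.\ that estimating the baseline $\bs{\alpha}$ does not inflate the leading-order variance of $\bs{\hat\varphi}$. Because the $\bs{\alpha}$-, cross-, and $\bs{\varphi}$-blocks of the per-word information are $(1-\pi)\bm{W}_0+\pi\bm{W}$, $\pi\bm{W}$, and $\pi\bm{W}$ respectively (with $\bm{W}_0=\mr{diag}(\bm{q}_0)-\bm{q}_0\bm{q}_0'$ and $\bm{q}_0=\bm{q}(0)$), the honest information for $\bs{\varphi}$ after profiling out $\bs{\alpha}$ is the Schur complement, whose correction $\pi^2\bm{W}[(1-\pi)\bm{W}_0+\pi\bm{W}]^{-1}\bm{W}$ does not vanish for general $\pi$. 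The stated covariance therefore corresponds to $\bs{\alpha}$ (equivalently the baseline $\bm{q}_0$) held known; I would justify it either by conditioning on a consistently pinned baseline or by observing that the Schur-complement correction is $O(\pi^2)$ and hence negligible against the leading $\pi\bm{W}$ in the small-$\pi$ regime typical of these applications. Pinning down which convention is intended is the crux; the remaining regularity checks are routine.
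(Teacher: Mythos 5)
Your main argument is exactly the paper's argument: the rejoinder gives no formal proof of Proposition~\ref{mlevar} at all, only the sentence preceding it (per-word expected information $\pi\bm{W}$ for $\bs{\varphi}$, plus the citation to \citet[chap.~5]{vdv1998}), and your score/information calculation with the $\sqrt{n}\to\sqrt{\pi M}$ rescaling is that argument carried out carefully. So on the main line you and the paper coincide.

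Your closing paragraph, however, is not a routine regularity check -- it exposes a real gap that the paper skips over, and your diagnosis is correct. The $\pi\bm{W}$ the paper inverts is the marginal $\bs{\varphi}$-block of the joint information, whereas the collapsed MNIR likelihood estimates $\bs{\alpha}$ too, so the operative information is your Schur complement. It even has a clean closed form: with $A=\pi\bm{W}$ and $B=(1-\pi)\bm{W}_0$,
\[
A - A(A+B)^{-1}A \;=\; A(A+B)^{-1}B,
\qquad
\left(A(A+B)^{-1}B\right)^{-1} \;=\; A^{-1}+B^{-1},
\]
so the honest limit for the joint MLE is
\[
\sqrt{\pi M}\,(\bs{\hat\varphi}-\bs{\varphi}) \;\leadsto\;
\mr{N}\!\left(\bm{0},\;\bm{W}^{-1} + \tfrac{\pi}{1-\pi}\bm{W}_0^{-1}\right),
\]
which you can sanity-check from the saturated two-sample view (with binary $y$ the model is two free multinomials of roughly $\pi M$ and $(1-\pi)M$ words, and $\bs{\varphi}$ is a difference of natural parameters, so both groups contribute variance -- the multiword analogue of the $1/a+1/b+1/c+1/d$ formula for a log odds ratio). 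Hence the stated covariance $\bm{W}^{-1}$ is exact only under your ``$\bs{\alpha}$ held known'' reading, and otherwise is the small-$\pi$ approximation you describe; for fixed $\pi\in(0,1)$ the proposition understates the variance by $\frac{\pi}{1-\pi}\bm{W}_0^{-1}$. Likewise, your reference-category normalization is not optional: the full $p\times p$ matrix $\bm{W}$ is singular ($\bm{W}\bm{1}=\bm{0}$), so $\bm{W}^{-1}$ only makes sense in the reduced parameterization. In short, the ``crux'' you flag is a defect of the proposition as stated, not of your proof; your proposal is the paper's argument made rigorous, together with the caveat the paper should have included.
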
 }

\noindent Thus variance decreases with the amount of speech rather
than with the number of speakers.  

Prediction requires an accompanying forward model.  If the collapsed model
holds true, Bayes rule implies a forward predictor and results of Proposition
\ref{mlevar} apply directly.  A more realistic scenario has the collapsed
model misspecified on an individual level.  Consider a model of individual
heterogeneity such that $\bm{x} \indep y \mid \bm{x}'\bs{\varphi}, \bm{u}$
where $\bs{\varphi}$ can be estimated consistently as in Proposition
\ref{mlevar} and $\bm{u}$ is a vector of unobserved random effects -- for
example, the model of Section 3.3 with $x_{ij} \sim
\mr{Po}\left(\mr{exp}[\mu_j + \varphi_jy_i + u_{ij}]\right)$ and $y_i \indep
u_{ij} \sim \mr{N}(0,1)$.  Write $z = \bs{\varphi}'\bm{f} =
\bs{\varphi}'(\bm{x}/m - \frac{1}{n}\sum_i \bm{x}_i/m_i)$ for   projection of
mean shifted frequencies $\bm{F} = [\bm{f}_1 \cdots \bm{f}_n]'$, and say
{MNIR-OLS} is the {two-stage} estimation of  $\bs{\hat \varphi}$ in collapsed
MNIR and $[\hat\alpha,\hat\beta]$ given  $\bm{\hat{z}} = \bm{F}\bs{\hat
\varphi}$ via least-squares (OLS).  Consider the simple forward approximation
$\ds{E}[y | \bm{f},\bm{u}] = \alpha + \beta z$ (e.g., if $y = \tilde \alpha +
\tilde \beta z + \bs{\gamma}' \bm{u} + \varepsilon$ and $u_j = a_j + b_j z +
\nu_j$ with $\nu_j\indep z$,  then $\beta = \tilde \beta +
\bs{\gamma}'\bm{b}$). Since $\ds{E}[y| \bm{f}] = \ds{E}[y| \bm{f}, \bm{u}]$
we have $\ds{E} \argmin_{\bs{\theta}} \sum_i (y_i - \alpha -
\bm{f}_i'\bs{\theta})^2   = \bs{\varphi}\beta$, such that OLS and MNIR-OLS
have the same  expectation and the  effect of $\bm{u}$ on $z$ is subsumed in
$\beta$.

The distinction of
MNIR-OLS is its estimation precision.

\begin{prop}\label{olsvar}     Consider data from  the joint word-sentiment
distribution of  Proposition \ref{mlevar} partitioned into documents
$\{\bm{x}_i,y_i\}_{i=1}^n$ where $0<\sum_i y_i<n$.   Assuming a finite 
upper-bound for each $|\hat\varphi_j|$, the MNIR-OLS predictor $\hat{y}(\bm{x})$
for a new document $\bm{x}$ has

\vspace{-.25cm}
\[  
\mr{var}\left(\hat{y}(\bm{x})\right)
\xrightarrow{M \to \infty} \sigma^2
\left(\frac{1}{n} + \frac{z^2}{\sum_{i=1}^n z_i^2}\right)  
\]

\noindent where $z = \bm{f}'\bs{\varphi}$ is the true  
projection for $\bm{x}$ and $\sigma^2$ is residual variance for 
 regression of $\bm{y}$ on $\bm{z}$.
\end{prop}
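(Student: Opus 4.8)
The plan is to show that, as $M\to\infty$, MNIR-OLS becomes asymptotically equivalent to the \emph{infeasible} OLS that regresses $\bm{y}$ on the true projections $\bm{z}=\bm{F}\bs{\varphi}$, so that its prediction variance coincides with the textbook simple-regression prediction variance evaluated at the true design. The content of the proposition is thus that the first-stage estimation of $\bs{\varphi}$ contributes nothing to the limiting variance.

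First I would set up the second stage as a simple linear regression of $\bm{y}$ on the scalar generated regressor $\bm{\hat z}=\bm{F}\bs{\hat\varphi}$. Because the columns of $\bm{F}$ are mean-centered, $\sum_i\bm{f}_i=\bm{0}$, so $\bar z=\bar{\hat z}=0$: the intercept decouples from the slope and the usual prediction-variance formula for a new document with fixed centered frequency $\bm{f}$ and fitted projection $\hat z=\bm{f}'\bs{\hat\varphi}$ collapses to
\[
\mr{var}\!\left(\hat y(\bm{x})\mid \bm{\hat z}\right)=\hat\sigma^2\left(\frac1n+\frac{\hat z^{\,2}}{\sum_{i=1}^n \hat z_i^{\,2}}\right),
\]
where $\hat\sigma^2$ is the second-stage residual variance. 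This already has the shape of the target, and it remains only to replace each hatted quantity by its population counterpart and to account for the extra variability injected by the generated regressor.

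Next I would invoke Proposition~\ref{mlevar}, which gives $\bs{\hat\varphi}\to\bs{\varphi}$ with $\|\bs{\hat\varphi}-\bs{\varphi}\|=O_p\big((\pi M)^{-1/2}\big)$. Together with the hypothesized finite bound on each $|\hat\varphi_j|$, which supplies a dominating envelope so that bounded convergence applies to the frequency-weighted sums, this yields $\hat z_i\to z_i$, $\hat z\to z$, $\sum_i\hat z_i^{\,2}\to\sum_i z_i^{\,2}$ and $\hat\sigma^2\to\sigma^2$ in probability. The constraint $0<\sum_i y_i<n$ ensures both response groups appear, so $\bm{q}_1\ne\bm{q}_0$, the regressor retains variation, and $\sum_i z_i^{\,2}$ stays bounded away from zero; this keeps the ratio $\hat z^{\,2}/\sum_i\hat z_i^{\,2}$ well-defined and the OLS design nondegenerate in the limit.

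The crux is the generated-regressor correction. Writing $\hat z_i-z_i=\bm{f}_i'(\bs{\hat\varphi}-\bs{\varphi})$ and tracking its propagation through $(\hat\alpha,\hat\beta)$ and hence through $\hat y(\bm{x})=\hat\alpha+\hat\beta\hat z$, the first-stage error enters only multiplied by the $O_p(1)$ second-stage quantities and by the vanishing $\hat z-z$. By Proposition~\ref{mlevar} this perturbation is $O_p\big((\pi M)^{-1/2}\big)$, so the additional two-step (generated-regressor) term it contributes to $\mr{var}(\hat y(\bm{x}))$ is $o(1)$ relative to the $O(1)$ second-stage OLS variance. I expect this to be the main obstacle: one must verify that no component of the first-stage error survives at $O(1)$, which calls for a joint Slutsky/delta-method argument (cf.\ \citet[chap.~5]{vdv1998}) collapsing the limit to that of the infeasible OLS. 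Assembling the pieces, $\hat\sigma^2\big(1/n+\hat z^{\,2}/\sum_i\hat z_i^{\,2}\big)$ converges in probability to $\sigma^2\big(1/n+z^2/\sum_i z_i^{\,2}\big)$, and the negligible correction leaves this limit unchanged, which is the claimed expression.
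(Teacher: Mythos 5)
Your overall route matches the paper's in its main moves: center so the intercept decouples ($\bar z = 0$), use Proposition~\ref{mlevar} plus continuous mapping/Slutsky to collapse the feasible two-stage fit onto the infeasible OLS on $\bm{z}$, and invoke the finite bound on each $|\hat\varphi_j|$ to finish. But there is a genuine gap at the finish. The proposition is a statement about $\mr{var}(\hat{y}(\bm{x}))$ --- a moment of the sampling distribution of the predictor --- whereas what your argument actually delivers is that the plug-in quantity $\hat\sigma^2\bigl(1/n + \hat z^2/\sum_i \hat z_i^2\bigr)$ converges \emph{in probability} to $\sigma^2\bigl(1/n + z^2/\sum_i z_i^2\bigr)$. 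These are different claims: convergence in probability (or in distribution) of random quantities does not imply convergence of their variances without uniform integrability. Moreover, your starting identity $\mr{var}\bigl(\hat y(\bm{x})\mid \bm{\hat z}\bigr) = \hat\sigma^2\bigl(1/n + \hat z^2/\sum_i \hat z_i^2\bigr)$ is not exact at finite $M$: conditional on $\bm{\hat z}$ the second-stage errors contain the first-stage distortion $\beta(z_i - \hat z_i)$ and $\hat\sigma^2$ is itself random, so you begin from the variance \emph{estimator} rather than the variance.

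The paper's proof is structured precisely to avoid this. It works with the variance of the predictor itself, $\mr{var}(\hat y(\bm{x})) = \mr{var}(\hat\alpha) + \bm{f}'\mr{var}(\bs{\hat\varphi}\hat\beta_{\bm{\hat z}})\bm{f}$ (using $\bar z = 0$), establishes the weak convergence $\bs{\hat\varphi}\hat\beta_{\bm{\hat z}} \leadsto \bs{\varphi}\hat\beta_{\bm{z}}$ by Slutsky, and then --- this is where the finite bound on $|\hat\varphi_j|$ does its real work --- notes that $\bs{\hat\varphi} \mapsto \bs{\hat\varphi}\hat\beta_{\bm{\hat z}}$ is bounded on its finite domain, so the Portmanteau lemma upgrades weak convergence to convergence of expectations of bounded functions, hence of second moments and of the variance, which for the limit is $\sigma^2\bs{\varphi}\bs{\varphi}'/\sum_i z_i^2$. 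In your proposal the bound is spent on a more incidental step (dominating the frequency-weighted sums so that $\hat z_i \to z_i$), and the moment-convergence step is never closed. To repair the argument you would need to add exactly that uniform-integrability/Portmanteau step: the boundedness of the predictor's components lets you pass from the distributional limit to $\mr{var}(\bs{\hat\varphi}\hat\beta_{\bm{\hat z}}) \to \sigma^2\bs{\varphi}\bs{\varphi}'/\sum_i z_i^2$, which combined with $\mr{var}(\hat\alpha) \to \sigma^2/n$ gives the displayed limit.
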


\begin{proof} 
Note $\bar z = \bm{0}$ and   $\mr{var}(\hat y(\bm{x})) =
\mr{var}(\hat\alpha) +   \bm{f}'\mr{var}(\bs{\hat \varphi}\hat\beta_{\bm{\hat
z}})\bm{f}$    where $\hat\beta_{\bm{\hat z}}$ is OLS slope on    $\bm{\hat z}
= \bm{F}\bs{\hat \varphi}$.   From Proposition \ref{mlevar} and 
the continuous
mapping theorem we have
$\bs{\hat\varphi}\overset{\mr{p}~}{\to}     
\bs{\varphi}$ and $\hat\beta_{\bm{\hat z}}
\leadsto \hat\beta_{\bm{z}}$.  Slutsky's lemma yields
$\bs{\hat \varphi}\hat\beta_{\bm{\hat z}}      \leadsto
\bs{\varphi}\hat\beta_{\bm{z}}$     with variance
$\bs{\varphi}\mr{var}(\hat\beta_{\bm{z}})\bs{\varphi}' =
\sigma^2\bs{\varphi}\bs{\varphi}'/\sum_i z_i^2.$   
Given that $\bs{\hat\varphi} \mapsto 
\bs{\hat \varphi}\hat\beta_{\bm{\hat z}}$ 
is bounded on its finite 
domain,  the Portmanteau lemma
implies our convergence.
\end{proof}

Thus, in our simple cartoon, MNIR-OLS approaches with {\it number-of-words}
the error rate of univariate least-squares.  This holds for  infill   (where
$n$ is constant but {speech-per-document} grows) as well as when $n$ is
growing with $M$ and the right-hand-side of \ref{olsvar} is decreasing.
Regularized estimation, say as applied in the main article, should help
efficiency in tougher setups (e.g., where vocabulary grows with $M$) but will
increase bias. Although we've focused on linear models many other
options are available -- for example, tree methods \citep[e.g.,][]{Brei2001}
work well in low dimensions for nonlinearity and variable
interaction.  The principles remain the same: results like Proposition
(\ref{mlevar}) show efficiency in collapsed IR, and one hopes to be able to
account for individual-level misspecification in the low dimensional forward
model.

\section{Latent factors}

Prof.~Blei's 2nd extension is an especially promising idea. Random effects
were originally viewed as a nuisance necessary  for  understanding
misspecification.  However,  a low-dimensional latent factorization of
these effects would be  a powerful tool for exploration and prediction.  It
provides a middle ground between  LDA and MNIR.

Such a model has log-odds $\bs{\eta} = \bs{\alpha} + \bs{\Phi}\bm{y} +
\bs{\Gamma}\bm{u}$  where $\bm{u} = [u_1\ldots u_K]'$ is a $K$-dimensional
factor vector.  $\bs{\Gamma}$ can then be interpreted as logit-transformed
LDA topics for variation in text not explained by variables in $\bm{y}$.
Just as $\bs{\Phi}'\bm{x}$ is sufficient for $\bm{y}$, the topic projection
$\bs{\Gamma}'\bm{x}$ will be sufficient for latent factors. Therefore the
model provides both a new way to think about latent structure in text
and a strategy for fast computation of  topic weights.

The difficulty with latent factor modeling is estimation. On the one hand,
although the model is more complex, estimation variance should still decrease
with $M$ because of the multinomial assumption on $\bm{x}$ (indeed, similar
arguments can explain the solid performance of LDA and sLDA
regression).  However, there are two big computational issues in posterior
maximization with document-specific  $\bs{\Gamma}\bm{u}_i$: you can no
longer collapse the likelihood, and you need to jointly solve for
$\bs{\Gamma}$ and $\bm{U}=[\bm{u}_1\ldots\bm{u}_n]'$.   Since the
discussants and I work on corpora many orders larger than the examples in this
article, additional latent structure is only useful if we can devise scalable
algorithms for its estimation.

On the lack of collapsibility,  which is also an issue for 
high-dimensional $\bm{y}$, I have had success  applying a MapReduce strategy
\citep{DeanGhem2004}.  A factorized likelihood is obtained by assuming counts
$x_{ij}$ and $x_{ik}$ for $j\neq k$ are independent and Poisson distributed
given $\bm{y}_i$ and $\bm{u}_i$  (centered on intensity $\exp(m_i/p)$ for
convenience).  The Map step groups counts on each column of $\bm{X}$ (i.e.,
for each word) and the Reduce step is a (possibly zero-inflated) Poisson log
regression of each word count onto $\bm{y}_i$ and $\bm{u}_i$. Exponential
family parametrization of the Poisson allows the same sufficiency results, and
the multinomial distribution for vectors of independent  Poissons given their
sum  implies a close connection to MNIR.  A paper on this  approach to
distributed multinomial regression is under preparation.

Even with these parallel algorithms,  it is difficult to solve for both
$\bm{U}$ and $\bs{\Gamma}$.  A  fixed-point solver (iterating between
maximization for each conditional on the other) is usually too slow.  One
could impute a rough guess for $\bm{U}$ (e.g., from a PCA of
document tf-idf), but this is only a stand-in solution.  Recent
advances in distributed optimization using ADMM \citep{Boyd2010} may offer a
way forward,  iterating from unique $\bm{U}_j$ for each $j^{th}$ word  towards
shared $\bm{U}$ across vocabulary, but this is just conjecture. The problem
of latent factor MNIR for large corpora remains unsolved. I look
forward to further discussion with Prof.~Blei on this because it is something
that his lab, if anybody, has a good chance of tackling.

\section{Interpretability}

Prof.~Grimmer's comments are focused on  interpretability:   the
translation from estimated models to scientific mechanisms. In particular,
he and other social scientists are interested in questions of {\it causation}.
This is among the  toughest of topics in statistics, and one that is only
growing in both difficulty and importance with the amount and dimension of
our data.

First, we should not underestimate the importance of predictive ability in
causal modeling.  The goal is always good prediction, but to understand
causation we want a model that predicts well when one covariate changes and
all others stay constant. Some of the best causal inference schemes are
explicitly predictive: matching, treatment-effects models, and propensity
scores rely upon estimation of the rate at which treated
individuals were assigned to that group.  As an example,  colleagues and I are
interested in measuring attribution for digital advertisements (i.e., how an
ad {\it causes} changes in consumer behavior).  This is a notoriously tough
problem, since the fact that a consumer sees an ad is highly correlated with
the likelihood that they were already looking to buy a certain product.  MNIR
for a consumer's text (e.g. on social media) and their browser history (where
website counts are treated like word counts) can be used to efficiently
predict the probabilities both that they see  an ad and that they buy a
product, and we hope to use this to disentangle these correlated outcomes.

However, instead of using text to help control for unobserved variables,
Prof.~Grimmer is seeking methods to infer the mechanisms behind word choice.
This is because he rightly wants to ensure that word loadings
correspond to a general notion of partisanship -- one that is portable between,
say, newspapers and congressional speech. This is the causal problem exploded
to simultaneous inference for thousands of correlated outputs. 
Regardless, MNIR is a natural starting point: I assume that
`sentiment' causes speech rather than the inverse. From this
one can look to apply the structural models used in econometrics and
biostatistics.  As mentioned, the effects of other inputs are `controlled for'
by including them in the log-odds, say as $\bs{\eta} =  \bs{\alpha} +
\bs{\varphi}y + \bs{\Theta}\bm{v}$ where $\bm{v} = [v_1 \ldots v_d]'$ are 
confounding variables. Going further, an MNIR treatment effects estimator
would  regress $y$ on $\bm{v}$ and  include the fitted expectation in the
equation for $\bs{\eta}$.   One needs to be careful here, as techniques used
for efficiency in high dimensions, such as sparse regularization, can bias
inference in unexpected ways.  See \citet{BellCherHans2012} for recent work on
sparse {high-dimensional} treatment effects estimation.

Finally, we should be aware of the limits of frameworks like MNIR (this also
relates to Prof. Blei's 3rd extension).  As Prof. Grimmer says, it {\it is}
difficult to know what covariates should be included or excluded from the
model.   However, this will always be as much of a problem in text analysis as
it has long been in social science. The `what' that we measure is only ever
defined in terms of observables and the model assumed around them (even with
human coders sentiment is dictated by the questions we ask).  The goal is
 to have this be as close as possible to our abstract ideal. For
example,  an ongoing project at Booth is investigating the history of
partisanship in congressional speech.  To define partisanship, we look at 
average predictability of party identity given words drawn from the 
distribution of speech for a given party.  The question of partisanship has
been transformed to one of predictability, and this notion is refined by
controlling for causes of word choice (e.g., geography, race) that we
understand as distinct from partisanship.  It is healthy to keep
this inference separate from abstract meanings for sentiment or partisanship,
in order to be clear on where evidence ends and speculation begins.

\vskip .5cm \noindent {\it Thanks to Jesse Shapiro, Matt Gentzkow, and Christian Hansen for helpful discussion.}

\setstretch{1}\small
\bibliographystyle{chicago}
\bibliography{taddy} 

\end{document}